\documentclass[reprint,amsmath,amssymb,aps,prl]{revtex4-2}

\usepackage{graphicx}
\usepackage{dcolumn}
\usepackage{bm}
\usepackage{amsthm}
\usepackage{tikz}
\usetikzlibrary{arrows.meta,positioning,bending}

\newtheorem{theorem}{Theorem}
\newtheorem{proposition}[theorem]{Proposition}

\newcommand{\G}{\mathcal{G}}
\newcommand{\VV}{\mathcal{V}}
\newcommand{\EE}{\mathcal{E}}
\newcommand{\DD}{\mathcal{D}}

\begin{document}

\title{Positive Arc-Weight Design Makes Every Directed Laplacian Diagonalizable}

\author{Aandrew Baggio Sahaya Arokiadoss}
\email{ee20d067@iitm.ac.in}
\affiliation{Department of Electrical Engineering,
Indian Institute of Technology Madras, Chennai 600036, India}

\author{G. Arunkumar}
\email{garunkumar@iitm.ac.in}
\affiliation{Department of Mathematics,
Indian Institute of Technology Madras, Chennai 600036, India}

\date{\today}

\begin{abstract}
For directed networks, the Laplacian need not be diagonalizable, so the
standard master-stability variational equations cannot in general be fully
decoupled into independent eigenmodes. We prove that this obstruction can
always be removed by coupling-strength design: every weakly connected
digraph admits a strictly positive weighting of its existing arcs for which
the weighted in-degree Laplacian is diagonalizable. The construction uses a
spanning directed acyclic subgraph with one source in each root strongly
connected component, assigns distinct positive weighted indegrees to its
non-source vertices, and then restores all remaining arcs with a common
sufficiently small positive weight. The zero eigenvalue remains semisimple
and all nonzero eigenvalues remain simple. We also give a discriminant
criterion that computes an admissible interval of restoring weights. Thus
any fixed weakly connected directed topology can be positively weighted so
that master-stability perturbations admit a complete modal decomposition.
\end{abstract}

\maketitle

Synchronization of interacting dynamical units is a central collective
phenomenon in complex systems, from neural dynamics to power-grid models
\cite{Strogatz2001,Boccaletti2006,Arenas2008}. A standard tool for testing
the linear stability of a synchronized trajectory is the master stability
function (MSF) of Pecora and Carroll \cite{PecoraCarroll1998}, which isolates
the oscillator dynamics from the network spectrum.

For a diagonalizable network Laplacian, the perturbation dynamics separate
into independent modal equations indexed by the Laplacian eigenvalues. This
separation is automatic for undirected networks because their Laplacians are
symmetric, but it can fail for directed networks: a directed Laplacian may
contain nontrivial Jordan blocks. Generalized MSF treatments can accommodate
such blocks \cite{NishikawaMotter2006}, and other graph-theoretic approaches
to synchronization avoid spectral diagonalization altogether
\cite{Belykh2004,Belykh2006,Liu2015}. Nevertheless, a complete eigenmode
decomposition remains the simplest setting for spectral stability analysis.

This raises a topology-preserving design question. Given an arbitrary
weakly connected directed network, can one change only the positive coupling
strengths on its existing arcs so that the resulting Laplacian becomes
diagonalizable? We prove that the answer is always yes.

Our construction separates the problem into a controllable acyclic backbone
and a perturbative restoration of the remaining arcs. We first choose a
weakly connected spanning directed acyclic graph with exactly one source in
each root strongly connected component (root SCC). By prescribing distinct
positive weighted indegrees to its non-source vertices, we obtain pairwise
distinct nonzero Laplacian eigenvalues. We then restore every omitted arc
with one common positive weight $w$. For sufficiently small $w$, these
nonzero eigenvalues remain simple, whereas the zero eigenvalue stays
semisimple with multiplicity equal to the number of root SCCs. The resulting
weighted Laplacian is therefore diagonalizable. A discriminant of the
reduced characteristic polynomial also yields a directly computable
admissible interval for $w$.

The result concerns modal decomposability rather than synchronization by
itself. Whether the synchronized state is stable still depends on the
oscillator and coupling dynamics and on where the nonzero Laplacian
eigenvalues lie relative to the stable region of the MSF. What the theorem
establishes is that nondiagonalizability is never an unavoidable consequence
of a weakly connected directed topology when positive arc weights are free
to be designed.

\section{Setup}

Let $\G=(\VV,\EE)$ be a finite digraph with
$\VV=\{v_1,\ldots,v_n\}$. Throughout, $\G$ is assumed to be loopless and
to have no parallel arcs, while antiparallel arcs are permitted. A positive
arc weighting is a map
\[
\omega:\EE\longrightarrow(0,\infty).
\]
For an arc $(v_i,v_j)\in\EE$, let $\omega_{ij}$ denote its weight.

We use the in-degree Laplacian convention. The weighted adjacency matrix
$\mathrm A=[a_{ij}]\in\mathbb R^{n\times n}$ is defined by
\begin{equation}
a_{ij}
=
\begin{cases}
\omega_{ji}, & \text{if }(v_j,v_i)\in\EE,\\
0, & \text{otherwise}.
\end{cases}
\label{eq:adjacency}
\end{equation}
Thus the $i$th row records the weights of the arcs entering $v_i$. The
\emph{weighted indegree} of $v_i$ is
\[
d_{\mathrm{in}}^\omega(v_i)
=
\sum_{(v_j,v_i)\in\EE}\omega_{ji}.
\]
If $v_i$ has no incoming arcs, we set
$d_{\mathrm{in}}^\omega(v_i)=0$. When all arc weights are equal to $1$,
this reduces to the ordinary indegree $d_{\mathrm{in}}(v_i)$, namely the
number of arcs entering $v_i$; in particular,
$d_{\mathrm{in}}(v_i)=0$ when $v_i$ has no incoming arcs.

The weighted in-degree Laplacian is
\begin{equation}
\mathrm L=\mathrm D-\mathrm A,
\qquad
\mathrm D
=
\operatorname{diag}
\bigl(
d_{\mathrm{in}}^\omega(v_1),\ldots,
d_{\mathrm{in}}^\omega(v_n)
\bigr).
\label{eq:laplacian}
\end{equation}
Since every row of $\mathrm L$ sums to zero,
$\mathrm L\boldsymbol 1_n=\mathbf 0_n$, and hence $0$ is always an
eigenvalue of $\mathrm L$.

Consider $n$ identical oscillators with state
$\mathbf x_i(t)\in\mathbb R^d$ governed by
\begin{equation}
\dot{\mathbf x}_i
=
f(\mathbf x_i)
-
\sigma
\sum_{j=1}^{n}
\mathrm L_{ij}\mathrm H(\mathbf x_j),
\qquad
i=1,\ldots,n,
\label{eq:network-dynamics}
\end{equation}
where $f:\mathbb R^d\to\mathbb R^d$ describes the intrinsic dynamics,
$\mathrm H:\mathbb R^d\to\mathbb R^d$ is the coupling function, and
$\sigma>0$ is the overall coupling strength.

A synchronized trajectory has the form
$\mathbf x_1(t)=\cdots=\mathbf x_n(t)=\mathbf s(t)$, where
\[
\dot{\mathbf s}=f(\mathbf s).
\]
Introducing infinitesimal perturbations about $\mathbf s(t)$ gives the
variational equation
\begin{equation}
\dot{\boldsymbol\xi}
=
\left[
\mathrm I_n\otimes\nabla f(\mathbf s)
-
\sigma\mathrm L\otimes\nabla\mathrm H(\mathbf s)
\right]
\boldsymbol\xi.
\label{eq:variational}
\end{equation}

If $\mathrm L$ is diagonalizable,
$\mathrm L=\mathrm P\Lambda\mathrm P^{-1}$, the transformation
\[
\boldsymbol\eta
=
(\mathrm P^{-1}\otimes\mathrm I_d)\boldsymbol\xi
\]
decouples Eq.~\eqref{eq:variational} into
\begin{equation}
\dot{\boldsymbol\eta}_i
=
\left[
\nabla f(\mathbf s)
-
\sigma\lambda_i\nabla\mathrm H(\mathbf s)
\right]
\boldsymbol\eta_i,
\qquad
i=1,\ldots,n.
\label{eq:decoupled}
\end{equation}
The master stability function $\Lambda_{\max}(\alpha)$ is the maximal
Lyapunov exponent of the parameterized variational equation
$\dot{\mathbf y}=[\nabla f(\mathbf s)-\alpha\nabla\mathrm H(\mathbf s)]
\mathbf y$, with the network entering through $\alpha=\sigma\lambda_i$.

We next recall the graph structure relevant to the zero eigenvalue and to
the weight construction. A vertex $u$ is said to be \emph{reachable} from
a vertex $v$ if there exists a directed path from $v$ to $u$. A digraph is
strongly connected if every vertex is reachable from every other vertex,
and weakly connected if its underlying undirected graph is connected. The
strongly connected components of $\G$ are denoted by
$\mathcal C_1,\ldots,\mathcal C_k$. Let us denote the strongly connected condensation of $\G$ by $\G^{\text{cond}}$, which is a directed acyclic graph. A strongly connected component with no incoming arc from another component is called a root strongly connected component (root SCC).

A \emph{directed acyclic graph} (DAG) is a digraph containing no directed
cycle, and a vertex of indegree zero is called a \emph{source}. A
\emph{spanning DAG} of $\G$ is a spanning subdigraph of $\G$ that is a DAG.
An \emph{out-arborescence} rooted at a vertex $u$ is a digraph whose
underlying undirected graph is a tree, in which $u$ has indegree $0$, every
other vertex has indegree $1$, and every vertex is reachable from $u$.

For a vertex $v$, let $R(v)$ denote the set consisting of $v$ together
with all vertices reachable from $v$. A \emph{reach} is a set $R(v)$
that is maximal, with respect to inclusion, among the reachable sets
$R(u)$, $u\in V(G)$. By \cite[Corollary~4.2]{caughman2006kernels}, the algebraic
and geometric multiplicities of the eigenvalue $0$ of the weighted
in-degree Laplacian are both equal to the number of reaches. The reaches
are precisely the maximal reachable sets associated with the root SCCs;
hence their number equals the number of root SCCs. Therefore, if $\G$ has
$r$ root SCCs, the algebraic and geometric multiplicities of $0$ are both
$r$. Thus the zero eigenvalue is \emph{semisimple}, meaning that its
algebraic and geometric multiplicities are equal.

\section{Diagonalizable Laplacian Weighting}

We now construct a spanning subdigraph of $\G$ whose Laplacian spectrum can be
controlled directly through its vertex indegrees.

\begin{proposition}
\label{prop:spanning-dag}
Let $\G$ be a weakly connected digraph with $r$ root SCCs. Then $\G$
contains a weakly connected spanning DAG $\DD$ having exactly $r$ source
vertices, one for each root SCC of $\G$.
\end{proposition}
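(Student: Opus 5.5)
We build $\DD$ in two stages: first a spanning forest of out-arborescences rooted at one vertex of each root SCC, then extra arcs of $\G$ that glue the trees together without creating a directed cycle or a new source.

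\emph{Stage one.} For each root SCC $\mathcal C_\ell$, $\ell=1,\ldots,r$, fix a vertex $u_\ell\in\mathcal C_\ell$. Every vertex $v$ of $\G$ is reachable from at least one $u_\ell$. Indeed, walk backwards in the condensation $\G^{\text{cond}}$ from the component containing $v$. Since $\G^{\text{cond}}$ is acyclic and finite, this walk ends at a component with no incoming arcs, i.e.\ a root SCC. Strong connectivity inside each component then yields a directed path from the corresponding $u_\ell$ to $v$.

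Run a multi-source breadth-first search from $\{u_1,\ldots,u_r\}$ and record, for each non-root vertex $v$, the arc through which it was first discovered. This gives a spanning subdigraph $\mathcal F$ that is a disjoint union of out-arborescences $T_1,\ldots,T_r$ rooted at $u_1,\ldots,u_r$. In $\mathcal F$ each $u_\ell$ has indegree $0$, because no root is discovered through an arc. Every other vertex has indegree exactly $1$. Assign each vertex its BFS level $\delta(v)$, the distance from $\{u_1,\ldots,u_r\}$. Every arc of $\mathcal F$ strictly increases $\delta$, so $\mathcal F$ is acyclic with exactly the $r$ sources $u_\ell$.

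\emph{Stage two: weak connectivity.} This is the main obstacle. The trees are weakly disconnected when $r>1$, and we may only add arcs of $\G$ that keep the graph acyclic and do not turn any $u_\ell$ into a non-source. Because $\G$ is weakly connected, some arc of $\G$ joins two different trees. Such an arc never enters a root $u_\ell$, since $u_\ell$ lies in a root SCC whose only incoming arcs come from inside $\mathcal C_\ell$. A short argument shows those internal vertices belong to $T_\ell$: they are reachable from $u_\ell$ within $\mathcal C_\ell$ but from no other root. Hence any cross-tree arc $(x,y)$ has a non-root head $y$.

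To keep acyclicity, replace $\delta$ by a global topological order. Acyclicity is automatic if every arc added goes forward in a fixed linear order $\prec$ that extends the arcs of $\mathcal F$. Then proceed greedily. Contract the trees to super-nodes and take a spanning tree of the resulting multigraph, using cross arcs of $\G$. For each chosen cross arc $(x,y)$, either it already agrees with $\prec$, or we must reorder. The cleanest fix is as follows. Since $y$ is not a root, $y$ is reachable from some root in its own tree. Choosing the BFS so that vertices reachable from several roots are handled consistently, we define $\prec$ by levels $\delta$ and break ties by tree index. An arc $(x,y)$ with $\delta(x)\ge\delta(y)$ is then the difficulty.

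What I would try first is to avoid the issue entirely. Instead of keeping $\mathcal F$ and adding arcs, take $\DD$ to consist of \emph{all} arcs $(x,y)$ of $\G$ with $\delta(y)=\delta(x)+1$. This is acyclic, since $\delta$ strictly increases along every arc. Its sources are exactly the roots, since every vertex at level $\ge1$ has a BFS predecessor. Weak connectivity then requires a cross arc between trees that increases $\delta$ by exactly one. If that fails, modify $\delta$ to $\delta'(v)=$ length of a \emph{longest} path from the root set in a suitable acyclic subgraph. Equivalently, take a topological order of the DAG obtained by removing within-SCC back arcs, and keep every arc of $\G$ that goes forward in this order. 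That subgraph contains $\mathcal F$ when the order is chosen compatibly, so it has no new sources. Each cross arc goes forward in one of its two orientations only, and the remaining work is to check that the forward cross arcs still connect all trees. This check is where I expect the real effort to lie.
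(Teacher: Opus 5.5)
\textbf{Verdict: genuine gap.} Your Stage one is sound. So is your observation that every arc of $\G$ joining two different BFS trees has a non-root head. The gap is Stage two: you never exhibit a subgraph that is simultaneously weakly connected, acyclic, and has exactly the sources $u_1,\ldots,u_r$. You explicitly leave the connectivity check open, and neither concrete candidate you end with works as stated.

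\begin{itemize}
\item The layered subgraph of arcs with $\delta(y)=\delta(x)+1$ can be disconnected. Take arcs $u_1\to a$, $a\to b$, $u_2\to b$; all SCCs are singletons and $r=2$. Then $\delta(a)=\delta(b)=1$, so $a\to b$ is discarded, and $\{u_1,a\}$ and $\{u_2,b\}$ are separated.
\item Keeping all arcs that go forward in a topological order extending $\mathcal F$ and the inter-SCC arcs depends on which order you pick. Take arcs $u_1\to a$ and $u_2\to c$, and a non-root SCC with cycle $a\to b\to c\to a$. BFS gives $T_1=\{u_1,a,b\}$ and $T_2=\{u_2,c\}$. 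The admissible order $u_1,a,u_2,c,b$ makes both cross arcs $b\to c$ and $c\to a$ backward, so the forward subgraph is disconnected.
\end{itemize}

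\textbf{What is missing.} Your first Stage-two plan already works, and it needs no linear order at all. Pick $r-1$ cross-tree arcs forming a spanning tree $S$ of the multigraph obtained by contracting $T_1,\ldots,T_r$. This multigraph is connected because $\G$ is weakly connected. Let $\DD$ be $\mathcal F$ together with these arcs.
\begin{itemize}
\item Acyclicity holds whatever the orientations. For a chosen arc $c$, deleting its edge from $S$ splits the trees into two groups, and $c$ is the only arc of $\DD$ between them. So no directed cycle can traverse $c$. A directed cycle using no chosen arc would lie inside one out-arborescence, which is impossible.
\item Weak connectivity holds by construction.
\item The sources are exactly $u_1,\ldots,u_r$. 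Adding arcs only raises indegrees, and by your own observation no cross arc enters a root.
\end{itemize}

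The paper sidesteps the issue differently, by aligning the arborescences with SCCs rather than with BFS trees. It takes one spanning out-arborescence per SCC, rooted in each non-root SCC at the head of a retained entering inter-SCC arc. It then keeps one arc of $\G$ per arc of $\G^{\mathrm{cond}}$. Acyclicity is inherited from the condensation: any directed cycle lies in a single SCC, where $\DD$ has only arborescence arcs.
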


\begin{proof}
Let $\mathcal C_1,\ldots,\mathcal C_k$ be the strongly connected components
of $\G$, and let $\G^{\mathrm{cond}}$ be its condensation digraph. Since
$\G$ is weakly connected, $\G^{\mathrm{cond}}$ is weakly connected, and
being a condensation digraph, it is acyclic. Its source vertices correspond
precisely to the root SCCs of $\G$.

For every arc $\mathcal C_i\to\mathcal C_j$ of
$\G^{\mathrm{cond}}$, retain exactly one corresponding arc of $\G$ from
$\mathcal C_i$ to $\mathcal C_j$. Inside each root SCC $\mathcal C_i$,
choose an arbitrary vertex $u_i$ and a spanning out-arborescence rooted at
$u_i$. Inside each non-root SCC $\mathcal C_i$, choose one retained
inter-SCC arc entering $\mathcal C_i$, let $u_i$ be its head, and choose a
spanning out-arborescence of $\mathcal C_i$ rooted at $u_i$. Such
out-arborescences exist because each $\mathcal C_i$ is strongly connected.

Let $\DD$ be the spanning subdigraph consisting of these
out-arborescences together with the retained inter-SCC arcs. Since the
underlying undirected graph of $\G^{\mathrm{cond}}$ is connected, $\DD$ is
weakly connected. No directed cycle can lie entirely within one chosen
out-arborescence, while any directed cycle passing through more than one SCC
would induce a directed cycle in $\G^{\mathrm{cond}}$. Hence $\DD$ is a
DAG.

In each root SCC, the chosen vertex $u_i$ has indegree zero in $\DD$, while
every other vertex receives an arc from its out-arborescence. In each
non-root SCC, the chosen vertex $u_i$ receives the retained inter-SCC arc
used to select it, and every other vertex receives an arc from its
out-arborescence. Therefore $\DD$ has exactly $r$ source vertices, one for
each root SCC of $\G$.
\end{proof}

Let $\DD$ be the spanning DAG obtained from
Proposition~\ref{prop:spanning-dag}. Since $\DD$ has exactly $r$ sources,
it has $n-r$ non-source vertices. Choose pairwise distinct positive numbers
$\lambda_1,\ldots,\lambda_{n-r}$, one for each non-source vertex of
$\DD$.

Assign positive weights to the incoming arcs of each non-source vertex so
that its weighted indegree is equal to (the corresponding) $\lambda_i$. For
example, if a vertex assigned $\lambda_i$ has $m_i$ incoming arcs, each of
these arcs may be assigned the weight $\lambda_i/m_i$. Thus the weighted
indegrees of the non-source vertices are pairwise distinct and positive,
while the $r$ source vertices have weighted indegree zero.

Since $\DD$ is a DAG, its vertices admit a topological ordering in which
every arc is directed from an earlier vertex to a later vertex. With respect
to this ordering, the weighted in-degree Laplacian $\mathrm L_{\DD}$ is
lower triangular, and its diagonal entries are precisely the weighted
indegrees of the vertices. Hence the eigenvalues of $\mathrm L_{\DD}$,
counted with algebraic multiplicity, are
\[
\underbrace{0,\ldots,0}_{r\text{ times}},
\lambda_1,\ldots,\lambda_{n-r}.
\]
In particular, all nonzero eigenvalues of $\mathrm L_{\DD}$ are positive
and pairwise distinct.

We now assign weights to the remaining arcs of $\G$. Let
$\EE_{\DD}\subseteq\EE$ denote the arc set of the spanning DAG $\DD$.
Retain the weights already assigned to the arcs of $\EE_{\DD}$ and assign
a common weight $w\geq0$ to every arc in
$\EE\setminus\EE_{\DD}$. Let $\mathrm L_c$ denote the in-degree
Laplacian of the spanning subdigraph with arc set
$\EE\setminus\EE_{\DD}$ when each of its arcs is assigned unit weight.
The resulting Laplacian is then
\begin{equation}
\mathrm L(w)
=
\mathrm L_{\DD}
+
w\mathrm L_c.
\label{eq:Lw}
\end{equation}
At $w=0$, we have $\mathrm L(0)=\mathrm L_{\DD}$. For every $w>0$,
however, every arc of the original digraph $\G$ has strictly positive
weight. Thus $\mathrm L(w)$ is the weighted in-degree Laplacian of the
original digraph for every $w>0$.

To determine what happens to the nonzero eigenvalues when the remaining
arcs are assigned positive weight, define
\begin{equation}
p(w,s)
=
\det\bigl(s\mathrm I_n-\mathrm L(w)\bigr).
\label{eq:char-poly}
\end{equation}
At $w=0$,
\[
p(0,s)
=
s^r\prod_{j=1}^{n-r}(s-\lambda_j).
\]
Thus each $\lambda_i$ is a simple root of $p(0,s)$. Indeed,
\begin{equation}
\frac{\partial p}{\partial s}(0,\lambda_i)
=
\lambda_i^r
\prod_{\substack{j=1\\j\neq i}}^{n-r}
(\lambda_i-\lambda_j)
\neq0.
\label{eq:ift-derivative}
\end{equation}

The Implicit Function Theorem therefore gives, for each
$i=1,\ldots,n-r$, an open interval $U_i$ containing $0$ and a unique
continuously differentiable function $g_i:U_i\to\mathbb R$ satisfying
\[
g_i(0)=\lambda_i,
\qquad
p\bigl(w,g_i(w)\bigr)=0.
\]
Hence $g_i(w)$ is an eigenvalue of $\mathrm L(w)$ for $w\in U_i$.

Since the numbers $0,\lambda_1,\ldots,\lambda_{n-r}$ are pairwise
distinct, choose pairwise disjoint open intervals
$J_1,\ldots,J_{n-r}$ such that $\lambda_i\in J_i$ and $0\notin J_i$.
By continuity of $g_i$ at $0$, there exists $\varepsilon_i>0$ such that
$(-\varepsilon_i,\varepsilon_i)\subset U_i$ and
$g_i(w)\in J_i$ whenever $|w|<\varepsilon_i$. Set
\[
\varepsilon
=
\min_{1\leq i\leq n-r}\varepsilon_i.
\]
Then, whenever $|w|<\varepsilon$, the values
$g_1(w),\ldots,g_{n-r}(w)$ are nonzero and pairwise distinct.

For every $0<w<\varepsilon$, all arcs of $\G$ have strictly positive
weights. Hence, if $\G$ has $r$ root SCCs, the eigenvalue $0$ of
$\mathrm L(w)$ has algebraic and geometric multiplicity $r$. Together
with the $n-r$ distinct nonzero eigenvalues above, these account for all
$n$ eigenvalues of $\mathrm L(w)$.

\begin{theorem}
\label{thm:main}
Every weakly connected digraph admits a strictly positive arc weighting
for which its weighted in-degree Laplacian is diagonalizable.
\end{theorem}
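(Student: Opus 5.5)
The plan is to assemble the pieces already established before the statement into a spectral count showing that $\mathrm L(w)$ has a full set of linearly independent eigenvectors. Fix the spanning DAG $\DD$ from Proposition~\ref{prop:spanning-dag}. Assign weights on $\EE_{\DD}$ so that the non-source vertices have pairwise distinct positive weighted indegrees $\lambda_1,\ldots,\lambda_{n-r}$. Assign the common weight $w$ to every arc of $\EE\setminus\EE_{\DD}$, and take any $w$ with $0<w<\varepsilon$, where $\varepsilon$ comes from the Implicit Function Theorem argument above. Every arc of $\G$ then carries a strictly positive weight, so $\mathrm L(w)$ is a weighted in-degree Laplacian of $\G$ itself. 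This is the weighting I will show to be diagonalizable.

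The first step is the eigenvalue count. By the Caughman--Veerman result quoted in the Setup, $0$ is an eigenvalue of $\mathrm L(w)$ with algebraic and geometric multiplicity $r$. This uses only that the weights are positive on the fixed topology $\G$, which has $r$ root SCCs. The functions $g_1(w),\ldots,g_{n-r}(w)$ are roots of $p(w,\cdot)$, and for $|w|<\varepsilon$ they lie in the pairwise disjoint intervals $J_i$, which avoid $0$. So they are $n-r$ distinct nonzero eigenvalues. Their multiplicities together with the multiplicity $r$ of $0$ sum to at least $n=\deg p(w,\cdot)$. Hence each $g_i(w)$ is a simple root, and the spectrum consists of $0$ with multiplicity $r$ plus these $n-r$ simple eigenvalues.

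The second step is the standard diagonalizability criterion. A matrix is diagonalizable if and only if every eigenvalue has equal algebraic and geometric multiplicity. For simple eigenvalues both multiplicities equal $1$. For $0$ both equal $r$. Eigenvectors from distinct eigenspaces are linearly independent, so together they give $r+(n-r)=n$ independent eigenvectors, and $\mathrm L(w)=\mathrm P\Lambda\mathrm P^{-1}$.

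The main point needing care is the counting step. One must rule out a branch $g_i(w)$ coinciding with $0$, or an extra root being counted outside the $g_i$ while some $g_i$ is actually a double root. The disjoint-interval choice of $J_i$, together with the fact that $0$ already uses up $r$ roots, settles both. The only substantive input is the semisimplicity of $0$, which is borrowed from the cited kernel result and is valid for $w>0$ but not at $w=0$ in the same form. That is why $w$ must be strictly positive.
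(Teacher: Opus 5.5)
Your proposal is correct and follows the paper's proof essentially step for step: the spanning DAG $\DD$ with distinct weighted indegrees, the Implicit Function Theorem branches $g_i(w)$ confined to disjoint intervals avoiding $0$, the Caughman--Veerman semisimplicity of $0$ for $w>0$, and the resulting $n$ independent eigenvectors (your explicit multiplicity count just spells out the paper's remark that these account for all $n$ eigenvalues). One minor remark: $w>0$ is needed because the theorem requires a strictly positive weight on every arc of $\G$, not because the kernel result breaks down at $w=0$; applied to $\DD$, whose root SCCs are exactly its $r$ sources, it gives the same semisimple zero eigenvalue of multiplicity $r$.
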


\begin{proof}
Let $\G$ have $r$ root SCCs. By
Proposition~\ref{prop:spanning-dag}, choose a weakly connected spanning
DAG $\DD$ having exactly $r$ sources, one for each root SCC of $\G$.
Assign positive weights to $\DD$ so that its $n-r$ nonzero Laplacian
eigenvalues are pairwise distinct, and assign the common weight $w>0$ to
all remaining arcs of $\G$.

By the preceding argument, $w$ can be chosen sufficiently small so that
the $n-r$ nonzero eigenvalues of $\mathrm L(w)$ remain pairwise distinct.
The zero eigenvalue has algebraic and geometric multiplicity $r$, and is
therefore semisimple. Thus $\mathrm L(w)$ has $r$ linearly independent
eigenvectors associated with $0$ and one eigenvector for each of its
$n-r$ simple nonzero eigenvalues. Hence it has $n$ linearly independent
eigenvectors and is diagonalizable.
\end{proof}

\section{Computing an Admissible Weight Interval}

The Implicit Function Theorem establishes the existence of a sufficiently
small positive value of $w$. An explicit admissible interval can instead be
obtained directly from the characteristic polynomial.

For every $w>0$, the zero eigenvalue of $\mathrm L(w)$ has algebraic
multiplicity $r$, and at $w=0$ the Laplacian $\mathrm L_{\DD}$ also has
zero as an eigenvalue of algebraic multiplicity $r$. Since $p(w,s)$ is a
polynomial in $w$ and $s$, we may write
\[
p(w,s)=s^r q(w,s),
\]
where $q(w,s)$ is monic of degree $n-r$ in $s$. At $w=0$,
\[
q(0,s)=\prod_{j=1}^{n-r}(s-\lambda_j),
\]
so its roots are pairwise distinct. Let
$\Delta(w)=\operatorname{Disc}_s q(w,s)$ be the discriminant of $q$ with
respect to $s$. Since $\Delta(0)\neq0$, the polynomial $\Delta(w)$ is not
identically zero.

If $\Delta$ has a positive real zero, define
\[
w_{\mathrm{crit}}
=
\min\{w>0:\Delta(w)=0\};
\]
otherwise, set $w_{\mathrm{crit}}=\infty$. Then, for every
$0<w<w_{\mathrm{crit}}$, the roots of $q(w,s)$ are pairwise distinct.
Moreover, since all arc weights are positive, the eigenvalue $0$ has
algebraic multiplicity exactly $r$, so none of these roots is zero.
Consequently, every $0<w<w_{\mathrm{crit}}$ gives a diagonalizable
weighted in-degree Laplacian. Thus $w_{\mathrm{crit}}$ provides a directly
computable initial interval of common arc weights for which the construction
is guaranteed to produce a diagonalizable Laplacian.

\section{Example}

Consider the weakly connected digraph shown in
Fig.~\ref{fig:original-digraph}. It has two root SCCs,
$\mathcal C_1=\{v_1,v_2\}$ and
$\mathcal C_2=\{v_3,v_4\}$. The arcs labelled $w$ are excluded from the
selected spanning DAG $\DD$ and will later be assigned a common positive
weight.

\begin{figure}[ht]
\centering
\begin{tikzpicture}[
    >=Stealth,
    vertex/.style={
        circle,
        draw,
        minimum size=7mm,
        inner sep=0pt
    },
    every path/.style={
        line width=0.8pt
    }
]

\node[vertex] (v1) at (0,1.5) {$v_1$};
\node[vertex] (v2) at (1.8,1.5) {$v_2$};

\node[vertex] (v3) at (0,-1.5) {$v_3$};
\node[vertex] (v4) at (1.8,-1.5) {$v_4$};

\node[vertex] (v5) at (3.8,1.5) {$v_5$};
\node[vertex] (v6) at (3.8,-1.5) {$v_6$};

\node[vertex] (v7) at (6,0) {$v_7$};

\draw[->]
(v1) to[bend left=18]
node[midway,above] {$1$}
(v2);

\draw[->]
(v2) to[bend left=18]
node[midway,below] {$w$}
(v1);

\draw[->]
(v3) to[bend left=18]
node[midway,above] {$4$}
(v4);

\draw[->]
(v4) to[bend left=18]
node[midway,below] {$w$}
(v3);

\draw[->]
(v2) --
node[midway,above] {$2$}
(v5);

\draw[->]
(v4) --
node[midway,below] {$5$}
(v6);

\draw[->]
(v5) to[out=-15,in=160]
node[midway,above] {$3$}
(v7);

\draw[->]
(v6) to[out=15,in=-160]
node[midway,below] {$3$}
(v7);

\end{tikzpicture}
\caption{
A weakly connected digraph with two root SCCs,
$\mathcal C_1=\{v_1,v_2\}$ and
$\mathcal C_2=\{v_3,v_4\}$. The arcs labelled $w$ lie outside the
selected spanning DAG $\DD$.
}
\label{fig:original-digraph}
\end{figure}

Removing the arcs labelled $w$ gives the spanning DAG shown in
Fig.~\ref{fig:spanning-dag}. Its source vertices are $v_1$ and $v_3$,
one for each root SCC of the original digraph. Notice that both $v_5\to v_7$ and $v_6\to v_7$ are retained, so the vertex $v_7$ has two incoming arcs in $\DD$.

\begin{figure}[ht]
\centering
\begin{tikzpicture}[
    >=Stealth,
    vertex/.style={
        circle,
        draw,
        minimum size=7mm,
        inner sep=0pt
    },
    source/.style={
        circle,
        draw,
        double,
        minimum size=7mm,
        inner sep=0pt
    },
    every path/.style={
        line width=0.8pt
    }
]

\node[source] (v1) at (0,1.5) {$v_1$};
\node[vertex] (v2) at (1.8,1.5) {$v_2$};

\node[source] (v3) at (0,-1.5) {$v_3$};
\node[vertex] (v4) at (1.8,-1.5) {$v_4$};

\node[vertex] (v5) at (3.8,1.5) {$v_5$};
\node[vertex] (v6) at (3.8,-1.5) {$v_6$};

\node[vertex] (v7) at (6,0) {$v_7$};

\draw[->]
(v1) --
node[midway,above] {$1$}
(v2);

\draw[->]
(v3) --
node[midway,below] {$4$}
(v4);

\draw[->]
(v2) --
node[midway,above] {$2$}
(v5);

\draw[->]
(v4) --
node[midway,below] {$5$}
(v6);

\draw[->]
(v5) to[out=-15,in=160]
node[midway,above] {$3$}
(v7);

\draw[->]
(v6) to[out=15,in=-160]
node[midway,below] {$3$}
(v7);

\end{tikzpicture}
\caption{
The selected weakly connected spanning DAG $\DD$. Its source vertices are
$v_1$ and $v_3$. The weighted indegrees of the non-source vertices
$v_2,v_4,v_5,v_6,v_7$ are $1,4,2,5,6$, respectively.
}
\label{fig:spanning-dag}
\end{figure}

With the topological ordering
$v_1,v_2,v_3,v_4,v_5,v_6,v_7$, the weighted in-degree Laplacian of
$\DD$ is
\begin{equation}
\mathrm L_{\DD}
=
\begin{pmatrix}
0 & 0 & 0 & 0 & 0 & 0 & 0\\
-1 & 1 & 0 & 0 & 0 & 0 & 0\\
0 & 0 & 0 & 0 & 0 & 0 & 0\\
0 & 0 & -4 & 4 & 0 & 0 & 0\\
0 & -2 & 0 & 0 & 2 & 0 & 0\\
0 & 0 & 0 & -5 & 0 & 5 & 0\\
0 & 0 & 0 & 0 & -3 & -3 & 6
\end{pmatrix}.
\label{eq:example-dag-laplacian}
\end{equation}
Since this matrix is lower triangular, its eigenvalues, counted with
algebraic multiplicity, are
\[
0,\;0,\;1,\;2,\;4,\;5,\;6.
\]

Assigning the common weight $w>0$ to the two remaining arcs gives
\begin{equation}
\mathrm L(w)
=
\begin{pmatrix}
w & -w & 0 & 0 & 0 & 0 & 0\\
-1 & 1 & 0 & 0 & 0 & 0 & 0\\
0 & 0 & w & -w & 0 & 0 & 0\\
0 & 0 & -4 & 4 & 0 & 0 & 0\\
0 & -2 & 0 & 0 & 2 & 0 & 0\\
0 & 0 & 0 & -5 & 0 & 5 & 0\\
0 & 0 & 0 & 0 & -3 & -3 & 6
\end{pmatrix}.
\label{eq:example-full-laplacian}
\end{equation}
Its eigenvalues are
\[
0,\;0,\;1+w,\;2,\;4+w,\;5,\;6.
\]
Hence the nonzero eigenvalues are pairwise distinct for $0<w<1$, and
the zero eigenvalue is semisimple with multiplicity $2$. Therefore
$\mathrm L(w)$ is diagonalizable throughout this interval.

The same interval is obtained from the discriminant criterion. The reduced
characteristic polynomial is
\[
q(w,s)
=
(s-1-w)(s-2)(s-4-w)(s-5)(s-6).
\]
The first positive value of $w$ at which two of its roots coincide is
$w=1$, where $1+w=2$ and $4+w=5$. Thus
$w_{\mathrm{crit}}=1$, and the computable initial interval is $(0,1)$.
For example, $w=\frac12$ gives the eigenvalues
\[
0,\;0,\;\frac32,\;2,\;\frac92,\;5,\;6,
\]
and hence a diagonalizable weighted Laplacian.

\section{Concluding Remarks}

We have proved that nondiagonalizability is not forced by the topology of a
weakly connected directed network. For every such digraph, strictly positive
weights can be assigned to all existing arcs so that its weighted in-degree
Laplacian is diagonalizable. The construction first creates simple nonzero
eigenvalues on a spanning acyclic backbone and then restores all omitted
arcs perturbatively; semisimplicity of the zero eigenvalue is preserved by
the root-SCC structure. The discriminant of the reduced characteristic
polynomial turns the local perturbation argument into a computable interval
of admissible restoring weights.

Consequently, any fixed weakly connected directed topology can be realized
with positive couplings for which the MSF perturbation equations possess a
complete modal decomposition. The remaining design problem is dynamical:
can the same positive weights be chosen so that the nonzero Laplacian
spectrum also lies in a prescribed MSF-stable region? Solving that joint
spectral-design problem would connect topology-preserving diagonalization
directly to optimal synchronization.

\bibliography{references}

\end{document}